\tikzset{curve/.style={settings={#1},to path={(\tikztostart)
    .. controls ($(\tikztostart)!\pv{pos}!(\tikztotarget)!\pv{height}!270:(\tikztotarget)$)
    and ($(\tikztostart)!1-\pv{pos}!(\tikztotarget)!\pv{height}!270:(\tikztotarget)$)
    .. (\tikztotarget)\tikztonodes}},
    settings/.code={\tikzset{quiver/.cd,#1}
        \def\pv##1{\pgfkeysvalueof{/tikz/quiver/##1}}},
    quiver/.cd,pos/.initial=0.35,height/.initial=0}
\tikzset{tail reversed/.code={\pgfsetarrowsstart{tikzcd to}}}
\tikzset{2tail/.code={\pgfsetarrowsstart{Implies[reversed]}}}
\tikzset{2tail reversed/.code={\pgfsetarrowsstart{Implies}}}
\tikzset{no body/.style={/tikz/dash pattern=on 0 off 1mm}}
\tikzstyle{decision} = [diamond, draw, fill=white]
\tikzstyle{line} = [draw, -stealth, thick]
\tikzstyle{elli}=[draw, ellipse, fill=gray!20, minimum height=8mm, text width=5em, text centered]
\tikzstyle{block} = [draw, rectangle, fill=white, text width=8em, text centered, minimum height=15mm, node distance=10em]
\newcommand{\dif}{\dd} 
\newcommand{\R}{\mathbb{R}}
\newcommand{\rem}[1]{}
\newcommand{\der}[2][]{\frac{\dd {#1}}{\dd {#2}}}
\newcommand{\pder}[2][]{\frac{\partial {#1}}{\partial {#2}}}
\newtheorem{proposition}{Proposition}
\newtheorem{assumptions}{Assumptions}
\newcommand{\VSI}{Van Swinderen Institute for Particle Physics and Gravity,\\ University of Groningen,
Nijenborgh 4, 9747 AG Groningen, The Netherlands}
\newcommand{\BI}{Bernoulli Institute for Mathematics, Computer Science and Artificial Intelligence,\\ University of Groningen, Nijenborgh 9, 9747 AG Groningen, The Netherlands}
\begin{document}

\title{Bertrand's Theorem and the Double Copy of Relativistic Field Theories}
\author{Dijs de Neeling}
\affiliation{\VSI}
\affiliation{\BI}
\author{Diederik Roest}\affiliation{\VSI}
\author{Marcello Seri}
\affiliation{\BI}
\author{Holger Waalkens}
\affiliation{\BI}

\begin{abstract}
\noindent
Which relativistic field theories give rise to Kepler dynamics in the two-body problem? We consider a class of Hamiltonians that is the unique relativistic extension of the Kepler problem preserving its so(4) algebra, and have orbits related through time reparametrisation to orbits of the original Kepler problem. For three explicit examples, we give a natural interpretation in terms of spin-0,-1 and -2 interacting field theories in 5D. These are organically connected via the classical double copy, which therefore preserves maximal superintegrability.
\end{abstract}

\maketitle

\section{Introduction} 
\noindent
The Laser Interferometer Space Antenna (LISA) is expected to open the door to observations of binary black holes systems with long inspirals and extreme mass ratio inspirals, both of which can be approached fruitfully from an analytical point of view~\cite{PhysRevD.95.103012,Berry:2019wgg,Cheung:2023lnj}.
The non-relativistic limit of these binaries in the center of mass frame is the Kepler problem. The properties of the latter are, even among classical systems, very special. It is one of only two central potential problems whose bounded orbits are all closed, the other being the isotropic harmonic oscillator, a result known as Bertrand’s theorem and going back to 1873~\cite{Bertrand,Goldstein2001-ql}. This is a consequence of the large, `hidden' so(4) symmetry of the system, leading to maximally superintegrable dynamics.

General Relativity breaks this enhanced so(4) symmetry down to so(3) (or even further in case of spinning objects), as do most other relativistic theories. However, there exist relativistic dynamical systems with the same symmetry group as Kepler to all orders. The simplest of these are test-particle limits\footnote{This is to be expected: while it is far from trivial to write down the Hamiltonian for a multi-worldline system explicitly, the test-particle case can be viewed as a one-body problem, as the large mass is taken to be non-moving.} and include the 'branonium' systems identified in string theory \cite{Burgess:2003qv,Burgess:2003mm}. More recently, the same hidden symmetry was found in $\mathcal{N}=4$ super-Yang-Mills~\cite{Caron_Huot_2014,Alvarez-Jimenez:2018lff} and $\mathcal{N}=8$ supergravity~\cite{Caron_Huot_2019}. Additionally, in the latter theory, the two-body system is known to possess the symmetry to first order in the Post-Newtonian expansion, while it preserves the related non-precession of orbits to third order in the Post-Minkowskian expansion~\cite{Parra_Martinez_2020dzs}. Finally, also the Kaluza-Klein monopole scattering of~\cite{GIBBONS1986183,GIBBONS1987226} should be noted\footnote{Another relativistic so(4)-conserving two-body system is the time-asymmetric scalar-vector theory studied by~\cite{Duviryak1996}.}.   

As we have shown recently by explicit construction, Bertrand's theorem can be extended beyond classical central potentials: at least up to and including 5th PN order, there exists a unique class of relativistic two-body Hamiltonians with the appropriate classical limit that displays additional symmetry. Moreover, all elements of this class can be related to functions of the Kepler Hamiltonian through canonical transformations~\cite{deNeeling:2023egt}. An all-order argument was subsequently put forward by Davis and Melville in \cite{Davis_2023zqv} and will be outlined below.

In the present work, we further investigate this class of relativistic Hamiltonians possessing Keplerian symmetry to all orders, having the correct special-relativistic and non-relativistic limits. Our focus will be on three examples of the class, for which we propose a realisation in terms of fundamental forces. We find that all these can be lifted to 5D systems, and subsequently linked to classical theories via the Eisenhart lift. This provides for a geometric interpretation of the link\footnote{
We generalise this link between classical and relativistic dynamics to multi-center systems in the Appendix~\ref{sec: prop1}, along with a rigorous mathematical proof of the relation.} with Keplerian dynamics. 

The 5D formulation also allows for a connection between the three examples via the double copy. Known primarily as a relation of scattering amplitudes in gravity appearing as the `square' of gauge theory amplitudes~\cite{Bern:2010ue}, we will instead employ the so-called classical double copy, similarly linking background solutions of these theories~\cite{Monteiro:2014cda}. More specifically, we will use it to connect dynamical systems in the sense of~\cite{Gonzo:2021drq}. Interestingly, the three examples of relativistic dynamics with Keplerian dynamics are exactly linked in this way. This provides a precise sense in which hidden symmetries and maximal superintegrability\footnote{Another, less symmetric example is provided by the integrability of the Kerr space-time that also carries over~\cite{Volovich}.} carry over under the classical double copy.

\section{Relativistic Bertrand's Theorem}

\noindent
It is natural to wonder what lies beyond the celebrated Bertrand's theorem of classical mechanics, once moving into the realm of relativistic physics. For reasons that will be outlined in what follows, we will consider the following (implicit) definition for our class of relativistic Hamiltonians:
 \begin{align}
     H^2 = m^2 c^4 + p^2 c^2 - \left( \frac{1}{r} \right)  F\left(mc^2 + H\right) \,, \label{Ham}
 \end{align}
where $F\left(x\right)$ is a continuous and smooth function that encodes the nature of the attractive force. Note that this class of Hamiltonians reduces, in the absence of the attractive force ($F \rightarrow 0$), to the special relativistic free particle. When perturbatively including the attractive force, 
 \begin{align}
  F(x) = f_0 + f_1 x + \tfrac{1}{2} f_2 x^2 + \ldots \,,
 \end{align}
we see the coefficients must have dimensions $\left[f_i\right]=M^{(2-i)} L^{(3-i)}T^{(i-2)}$. Specializing to $f_2={8 GM}/{c^2}$ as the only nonzero component for instance, with $M$ some reference mass, the resulting Hamiltonian reads in a post-Newtonian expansion 
  \begin{align} 
  H = mc^2  &+ \frac{p^2}{2 m}- \frac{8GMm}{r}  \notag \\
  &  -\tfrac18 \frac{p^4}{m^3c^2} + \frac{32 G^2M^2m}{r^2c^2} +\mathcal{O}({\frac{1}{c^4}}) \,. 
 \end{align}
In the non-relativistic limit, the system therefore reduces to Kepler (up to a constant rest-mass energy).

The relativistic corrections of this class of Hamiltonians retain the special property that all bounded trajectories are closed orbits. At a heuristic level, this can be seen in the following way. As energy is conserved, the argument of the force term in~\eqref{Ham} will take a specific, constant value for a given trajectory. For this trajectory, the force is therefore given by some constant $F$, and the Hamiltonian reads
 \begin{align}
 H = mc^2 \sqrt{1 + 2 H_{\rm Kep} / mc^2} \,,   
 \end{align}
for a Kepler Hamiltonian
 \begin{align} \label{eq: Kep}
     H_{\rm Kep} = \frac{p^2}{2m} - \frac{1}{r} \frac{F}{2mc^2}\,. 
 \end{align}
It is therefore natural to expect that this class of systems has closed orbits. This is confirmed by the presence of a relativistic generalisation of the Laplace-Runge-Lenz vector, that takes the form
 \begin{align}
  \vec{A} = \vec{p} \times \Vec{L}  - \frac{1}{2c^2} F\left(mc^2 + H\right)  \hat{r} \,,
 \end{align}
in terms of angular momentum $\vec L = \vec r \times \vec p$ and the unit position vector $\hat r$.
This vector is responsible for the maximal superintegrability and hence closed orbits in the non-relativistic case. 

The dynamics of the above theories can be seen as a non-standard way of stacking Kepler energy levels: the effective gravitational coefficient is set by $F$ and hence is orbit-dependent. In a hypothetical solar system governed by such an attractive force, planets would fail to satisfy the universal Kepler's third law relation between period and radius. As discussed in more detail in the Appendix~\ref{sec: prop1}, a particular time reparametrisation of~\eqref{Ham} connects it to a new Kepler Hamiltonian, for which Kepler's third law does hold for all orbits with the same universal proportionality constant. 

One issue with the above heuristic argument is that it only applies to separate energy levels and not the complete phase space. It would therefore be desirable to have an alternative perspective, that applies to the union of all orbits. Indeed this appears to be possible: we have demonstrated (with an explicit construction) that this class of Hamiltonians is the unique extension of the Kepler system that combines the special relativistic and the non-relativistic limits (when $F \rightarrow 0$ and $c\to\infty$, respectively) that has the so(4) hidden symmetry and closed orbits, up to and including 5PN order. It is therefore the natural generalisation of Bertrand's theorem to the relativistic domain (at least for the Kepler system).
Moreover, this system is canonically equivalent to Kepler in a neighbourhood around an energy level, at least up to 5PN order~\cite{deNeeling:2023egt}. 

A closely related claim has been put forward by Davis and Melville in~\cite{Davis_2023zqv}, who employ a reorganisation of PN corrections into powers of $1/r$ multiplied by functions of the Kepler Hamiltonian. They argue that the function multiplying the $1/r$ part of this expansion can be made vanishing (their so-called `LRL gauge') by means of canonical transformations. Subsequently, they demonstrate that all higher-order powers of $1/r$ similarly have to vanish for the system to have closed orbits and hidden symmetries. This then implies that the class of Hamiltonians \eqref{Ham} is canonically conjugate to Kepler to all PN orders.

Having introduced the class of Hamiltonians as the relativistic generalisation of the Kepler system that has hidden symmetries, in the next sections we turn to their physical interpretation; in other words, which forces generate this kind of dynamics? We will show that there are three cases - where we take $F$ to be constant, linear or quadratic in its argument - where a natural interpretation in terms of 4D relativistic field theories presents itself. Moreover, all three cases allow for an uplift to a 5D system. Different reductions of this 5D perspective allow one to prove the canonical equivalence between the classical and relativistic systems to all orders in the PN expansion.

\section{General Relativity}

\noindent
We will first discuss the quadratic case, $F(x) = \frac{1}{2} f_2 x^2$, and its interpretation in terms of general relativity. As the constant $f_2$ must have the dimension of length, it is natural to suppose this is given in terms of a mass scale set by some reference mass M, such that $f_2=2r_M$, with $r_M=\frac{4GM}{c^2}$ (while the factor of $4$ will be clear in a moment). The Hamiltonian reads
 \begin{align} \label{eq:GRpHam}
     H^2 = m^2 c^4 + p^2 c^2 - \left( \frac{r_M}{r} \right) (m c^2 + H)^2 \,.
 \end{align}
It has been shown\footnote{Compare equation (4.22) in~\cite{deNeeling:2023egt} with $H\to H+m$.} that this system arises when an extremal particle (with equal mass and charge) is orbiting a specific background of the 4D Einstein-Maxwell-dilaton system, with bulk Lagrangian density
 \begin{align}
  \mathcal{L}= \frac{\sqrt{-g}}{16\pi}\left(\frac{c^3}{G}R-\frac{2c^3}{G}(\partial\phi)^2-\frac{e^{-2a\phi}}{c}F^2\right) \,,
 \end{align}
where we take the units such that the gauge potential $A_\mu$ has $\left[A^2\right]=M L T^{-2}$. For closed orbits, the dilaton coupling of this system has to take precisely the Kaluza-Klein value $a=\sqrt{3}$ that allows for the bulk Lagrangian to be uplifted to 5D. This suggests that also the particle Hamiltonian~\eqref{eq:GRpHam} has a 5D interpretation. Indeed this is the case; upon making the identifications
 \begin{align}
  P_0 = \frac{H}{c} \,, \quad P_5 = mc \,, \quad (\text{\it space-like}) \,, \label{spacelike}
 \end{align}
the Hamiltonian \eqref{eq:GRpHam} corresponds to geodesic motion along a null geodesic,
 \begin{align}
    g^{AB} P_A P_B = 0 \,.
 \end{align} 
The metric is the uplift of the Einstein-Maxwell-dilaton background
 \begin{align} 
    ds^2 = \eta_{AB} dx^A dx^B + \left( \frac{r_M}{r} \right)(dx^0 - dx^5)^2 \,, \label{bargmann-kepler}
 \end{align}
and is a so-called Bargmann space. Because of the harmonicity of the potential $\frac{r_M}{r}$, this space is Ricci-flat, i.e. it solves the vacuum Einstein equations~\cite{Duval:1990hj}. This spacetime is an example of a  pp-wave (plane-fronted wave with parallel rays) due to the presence of the covariantly constant null-vector $\partial_0 +\partial_5$.

Interestingly, the same 5D perspective can also be related to the non-relativistic Kepler case; this is referred to as the Eisenhart-Duval lift~\cite{Cariglia:2012fi,Cariglia:2014dwa}. Starting in 5D and making the alternative identifications
 \begin{align}
    \frac{P_0-P_5}{\sqrt{2}}= \frac{ \Bar{H}_{\rm Kep}}{c} \,, \quad \frac{P_0+P_5}{\sqrt{2}}= mc \,, \quad (\text{\it null}\,) \,, \label{null}
 \end{align}
the vanishing norm of the 5D momentum leads to a particular non-relativistic Kepler system
 \begin{align}
     \Bar{H}_{\rm Kep}= \frac{p^2}{2m}-\frac{r_M mc^2 }{r}\,.
 \end{align}
Though in different guises, the above two reduced systems therefore are one and the same, up to time reparametrisation (or conformal transformations, in the case of null-geodesics~\cite{Zhang:2019gtt}). Yet the reductions correspond to what appear to be distinct physical systems, dependent on the different assignments \eqref{spacelike} or \eqref{null} for mass and energy, as discussed in generality in Appendix~\ref{sec: prop1}.

The 5D formulation allows for a geometric perspective on the so(4) hidden symmetry of the Kepler system, as these can be seen to be generated by an interplay of Killing vectors and tensors~\cite{Duval:1990hj,Cariglia:2014dwa}. 
A vector field $K$ is a Killing vector field if 
\begin{equation}
    \nabla_{(a}K_{b)}=0\,.
\end{equation}
and generate conserved quantities $K^a p_a$ along geodesic motion. A rank-$n$ Killing tensor $K_{a_1\dots a_n}$ is defined analogously by~\cite{Lindstrom:2022nrm}
\begin{equation}
    \nabla_{(a_1}K_{a_2 \dots a_{n+1})}=0\,,
\end{equation}
and leads to a conserved quantity $K^{a_1\dots a_n}p_{a_1}\cdots p_{a_n}$ along geodesics~\cite{Walker:1970un}. Notice that the arrow of implication goes both ways: any conserved quantity that can be written as a tensor contracted with momenta also implies the existence of a Killing tensor. Conserved quantities of quadratic or higher order in momenta are symmetries of the phase space that cannot be constructed as the lift of a symmetry of configuration space, and are referred to as hidden symmetries~\cite{Cariglia:2014ysa}. 

The above concepts can be applied to the 5D incarnation of the Kepler problem. The Bargmann space~\eqref{bargmann-kepler} has three Killing vectors associated with angular momentum. In addition, it has three Killing tensors that generate the Laplace-Runge-Lenz vector reading
\begin{equation} \label{eq: 5dLRL}
    \vec{A}= (\vec{p}\cross (\vec{r}\cross \vec{p})) - \frac{1}{2} (P_0+P_5)^2 \frac{r_M c^2}{r}\vec{r} \,,
\end{equation}
which is quadratic in momenta.

Let us conclude this section with two remarks on the 5D symmetry algebra. Firstly, when we take the limit $M\to 0$ toward flat space, the second term in the above vector vanishes, implying the quadratic Killing tensors become reducible, as they are now built completely out of the Killing vectors related to the conserved angular momentum $\vec{L}=\vec{r}\cross \vec{p}$ and the momentum $\vec{p}$. The latter in this case is conserved as well, since the space regains translational symmetry in the three spatial directions of $\vec{r}$.

Secondly, note that none of the conserved quantities related through Noether's theorem to the so(4) symmetry of the Kepler problem is generated by a genuinely conformal Killing tensor or vector. All conserved functions commute with the 5D geodesic Hamiltonian regardless of its value. As such, it is clear that the relation between Kaluza-Klein dynamics and the Kepler system still holds for \emph{massive} geodesics, though the effect on the dynamics is of negligible consequence, as it results in the addition of a fixed constant to the Hamiltonians after reduction. % This constant is fixed in the same sense as the mass-like constant $P_v$ or $P_u$ in case of reduction over $v$ or $u$ respectively. That is to say, in the reduced systems it is part of the definition of the Hamiltonian, rather than a constant function on phase space. 

\section{Electromagnetism}

\noindent
We now turn to the second element of the class of Hamiltonians~\eqref{Ham} for which we propose an interpretation as coming from a relativistic field theory, being the case with a linear function $F(x)=f_1 x$. In this case the Hamiltonian reads
\begin{align}
     H^2 = m^2 c^4 + p^2 c^2 - \left( \frac{1}{r} \right)f_1 (m c^2 + H) \,. \label{linear}
 \end{align}
Upon making the space-like replacements~\eqref{spacelike}, this becomes
 \begin{align}
  \eta^{AB} P_A P_B -  \left( \frac{1 }{r} \right) \frac{f_1}{c}  (P_5 +P_0) = 0 \,.
 \end{align}
In contrast to the quadratic case, the above cannot be written in terms of a Lorentzian metric solving the Einstein equations\footnote{It can be written as a momentum norm on a Finsler space, a generalisation of pseudo-Riemannian space~\cite{Beil1987}.}. Instead, the natural interpretation of the lifted form is in terms of an electromagnetic force, linear in momenta. 

This can be understood in terms of the classical double copy, which maps the specific class of so-called Kerr-Schild backgrounds of General Relativity to solutions of Maxwell's equations in electromagnetism. Among more obvious solutions such as the Kerr and Schwarzschild black holes, also time-dependent pp-waves have been shown to have a classical double copy structure~\cite{Monteiro:2014cda}. The spacetime we are interested in is a subset of this, being a time-independent pp-wave.

The class of Kerr-Schild solutions takes the form
\begin{equation}
    g_{AB}=\Bar{g}_{AB}+r_s \phi\,  l_Al_B\,,
\end{equation}
and consist of a background metric $\Bar{g}$ (which should separately satisfy Einstein's equations) plus a deformation in terms of a harmonic scalar function, $\Box \phi(r) = 0$, and a vector $l_A$ that is null with respect to $\Bar{g}$ and therefore also total metric $g$. The Schwarzschild radius reads $r_s=2GM/c^2$. The Kerr-Schild double copy then says that the vector $A_A=\phi\,  l_A$ is a solution of the Maxwell equations and $\phi(r)$ is a solution of the scalar field equation.

The Bargmann space introduced above is exactly of Kerr-Schild form, with a flat background metric and SO(3) rotationally symmetric function, $\phi =2 / r$. This gravitational solution therefore generates, via the double copy procedure, the following solution of Maxwell's equations, now including all constants: 
\begin{equation} 
    A_A=\frac{g}{4\pi}\Tilde{q}\phi(r)l_A\,,\label{Maxwell} 
\end{equation}
where we introduced $g$ as electromagnetic coupling and $\Tilde{q}$ as a charge. This single copy is itself an electromagnetic pp-wave and $l_A$ is its wave vector. 

Given the interesting dynamics of geodesics in the Bargmann space-time, can we expect similar behaviour in the Maxwell configuration? In other words, does the hidden symmetry and maximal superintegrability survive the double copy? Interestingly, this turns out indeed to be the case: the Hamiltonian with linear force function \eqref{linear} has an interpretation as a charged particle in the single copy background \eqref{Maxwell}. 
Specifically, after setting $f_1=\frac{g^2}{4\pi}q\Tilde{q}$ and uplifting to 5D via \eqref{spacelike}, the Hamiltonian of the linear case can be written as 
 \begin{align}
  \eta^{AB} \tilde P_A \tilde P_B = 0 \,,\  \text{with }\tilde P_A=P_A - \frac{g q}{c} A_A\,,
 \end{align}
in terms of the modified momentum $\tilde P$ of charged particles in the vacuum EM background~\eqref{Maxwell}. This places the linear case on a par with the quadratic one, with interpretations in terms of electromagnetic and gravitational force fields, respectively.

With the 5D interpretation in hand, one can consider its dimensional reductions. The null reduction straightforwardly leads to the usual non-relativistic Hamiltonian, which in this case is interpreted as the Coulomb system. In contrast, the space-like reduction leads to a relativistic field theory with coupled degrees of freedom. In order to see this, separate the 5-dimensional vector field in parts $A_A=(A_\mu,A_5=\chi)$. The field Lagrangian then reads
\begin{equation}
    \mathcal{L}=-\frac{1}{4}F_{AB}F^{AB}=-\frac{1}{4}F_{\mu\nu}F^{\mu\nu}-\frac12(\partial_\mu \chi)(\partial^\mu \chi)\,,
\end{equation}
implying a theory with a scalar $\chi$ and vector $A_\mu$. The particle Lagrangian can be found by taking 
\begin{equation}
    \mathcal{L}_p=-\frac{1}{2 h}\eta_{AB}\dot{x}^A\dot{x}^B +\frac{g q}{c} A_A \dot{x}^A\,,
\end{equation}
which can be Legendre transformed to write it in terms of $P_5$ instead of $\dot{x}_5$. One then has
\begin{equation} %% Write more general, check signs.
    \mathcal{L}_p=\frac{1}{2h}\eta_{\mu\nu}\dot{x}^\mu\dot{x}^\nu-\frac{h}{2}\left(\frac{g q}{c}\chi-P_5\right)^2- \frac{g q}{c}A_{\mu}\dot{x}^\mu\,.
\end{equation}
Solving the equation for the auxiliary variable $h$, picking the branch giving the correct kinetic term, we find the Lagrangian
\begin{equation}
    \mathcal{L}_p|_h=-\sqrt{-\left(\frac{g q}{c} \chi- P_5\right)^2\eta_{\mu\nu}\dot{x}^\mu\dot{x}^\nu}- \frac{g q}{c} A_{\mu}\dot{x}^\mu\,.
\end{equation}
The solutions of the fields are, as in~\eqref{Maxwell}, given by
\begin{equation}
    \chi=\frac{g}{4\pi}\tilde q \frac{2}{r} \,,\qquad A_0=-\frac{g}{4\pi}\tilde q \frac{2}{r}\,.
\end{equation}
This is exactly the theory found by~\cite{Alvarez-Jimenez:2018lff}, describing a non-minimally coupled scalar-vector theory, having a LRL symmetry. It has its origin in $\mathcal{N}=4$ supersymmetric Yang-Mills, which similarly enjoys an additional symmetry in the the limit that one of the particles is much larger than the other~\cite{Caron_Huot_2014,Sakata:2017pue}.

Due to the 5D interpretation of this theory, we now have a natural link between the non-relativistic Coulomb system and this specific relativistic Maxwell-axion field theory. The relation between both symplectic reductions implies that the energy-level equivalence between the two theories necessarily extends to all orders in the PN expansion, with the time reparametrisation as discussed in Appendix~\ref{sec: prop1}.

\section{Nordstr\"{o}m gravity}

\noindent
The last Hamiltonian in the class~\eqref{Ham} that we propose a field theoretic interpretation for, is the one with function $F(x)=f_0$ constant: 
\begin{align} \label{eq:constantF}
     H^2 = m^2 c^4 + p^2 c^2 - \left( \frac{1}{r} \right) f_0  \,.
 \end{align}
The Hamiltonian of this system is a simple function of the Kepler Hamiltonian, and as such naturally has a hidden symmetry and closed orbits. Given the sequence of double and single copy in the previous two sections, however, it is natural to investigate its interpretation as a zeroth copy relativistic field theory. This will feature an attractive scalar field, as first proposed by Nordstr\"{o}m in 1912~\cite{Nordstrom_1912}.
 
The scalar `zeroth copy' is an Abelian version of the bi-adjoint scalar. The constant $f_0$ in the potential term now has the dimension of energy squared times length, so we set $f_0=r_M m_s^2c^4$, with $r_M$ as before and $m_s$ some mass. Lifting the Hamiltonian to 5D yields
\begin{align}
  \eta_{AB} P^A P^B = m_s^2 c^2 \frac{r_M}{r} \,,
\end{align}
and hence a position-dependent mass. Legendre transforming this to the Lagrangian of a geodesic yields
\begin{equation}
    \mathcal{L}_p=\frac{\eta_{AB} \dot{x}^A\dot{x}^B}{2h}+ m_s^2 c^2\frac{h}{2}\frac{r_M}{r}\,, \label{with-vielbein}
\end{equation}
where we have included the auxiliary Einbein $h$. It can be solved for to generate 
\begin{equation}
    \mathcal{L}_p|_h=m_sc\sqrt{\phi(r)\eta_{AB} \dot{x}^A\dot{x}^B}\,,
\end{equation}
with the scalar field $\phi(r)=\frac{r_M}{r}$ minimally coupled and now dimensionless, and solving the field equation in a vacuum. 

This completes the trilogy of double copy related relativistic field theories, at least from the 5D perspective: the Hamiltonians with constant, linear and quadratic functions correspond to the natural sequence of spin-$0,1,2$ exchange field theories.

Again, it will be interesting to investigate the dimensional reductions. As before, the null reduction directly generates a non-relativistic system akin to the Kepler and Coulomb ones. The space-like reduction, instead, generates a coupling to a new scalar field profile. After dimensional reduction of \eqref{with-vielbein} and subsequent elimination of the Einbein $h$, we find 
\begin{equation}
    \mathcal{L}_p|_h=-m_sc\sqrt{\Bar{\phi}(r) (-\eta_{\mu\nu}\dot{x}^\mu\dot{x}^\nu )}\,, \label{Nordstrom}
\end{equation}
with the scalar now given by 
 \begin{align}
  \Bar{\phi}(r)=1-\frac{r_M}{r}  \,. \label{profile}
 \end{align}
This particular profile for the scalar field then produces exactly closing time-like curves, and it is the Lagrangian giving rise to $H$ in equation~\eqref{eq:constantF} with proper time parametrisation.

In general, scalar fields generate a perihelion precession that is given to first post-Newtonian order by
\begin{equation}
    \Delta\theta=-\frac{1+a_2}{6}\Delta\theta_{\rm GR}\,,
\end{equation}
with $\Delta\theta_{\rm GR}$ the perihelion shift in GR and $a_2$ is a parameter that characterises the coupling of the scalar field (at quadratic level) to the probe particle~\cite{Deruelle:2011wu}.
How does this and  similar results relate to our system displaying no precession? Whereas $a_2=0$ in Nordstr\"{o}m's final theory (see \cite{Sundrum:2003yt}), $a_2=1$ in the only-scalar limit of~\cite{Davis_2023zqv} and Nordstr\"{o}m's first theory (see~\cite{Nordstrom_1912,Ravndal_2004ym}) and indeed, working out the square-root expansion we have $a_2=-1$ for the scalar~\eqref{profile}.   

At this point it is also interesting to compare to the classification of Bertrand spacetimes, as pioneered by Perlick~\cite{VPerlick_1992}. These spacetimes are defined by having the special property of having closed orbits; however, Bertrand spacetimes are not required to solve the source-free Einstein's equations and generically need to be supplemented with non-trivial energy-momentum tensors. In contrast, in all three cases that we have discussed, the backgrounds satisfy the source-free bulk equations of motion of the three relativistic field theories. The latter case of the zeroth copy includes the coupling to the harmonic scalar field in~\cite{Nordstrom_1912}. From the particle's perspective, this is equivalent to coupling to a conformally flat metric with overall factor~\eqref{profile}; this therefore has to be a Bertrand space-time. We have checked that the general class of solutions of~\cite{VPerlick_1992} indeed includes this as the unique conformally flat possibility\footnote{Specifically, the mapping is from the lower signs in equation (13) of~\cite{VPerlick_1992} with $D\to-2/r_M^2$, $K\to 4/r_M^4$ and $G\to 0$ and rescaling the time $t\to t\; r_M/\sqrt{2}$.}.

\section{Discussion and conclusion}

\noindent
In this paper, we have studied the possible interactions giving rise to Keplerian symmetry in field theories. We have presented the unique extension of the classical Kepler system to a relativistic system with the same so(4) algebra and the correct special relativistic limit, at least up to 5PN. This relativistic class of Hamiltonians contains terms naturally interpretable through a lift to 5 dimensions as spin-0,-1 and -2 interaction. We discussed how these project to relativistic theories in 4 dimensions, and their relation to the classical Kepler problem. Moreover, the interactions can be viewed as zeroth, single and double copy in a classical double copy structure. 

The 4D systems with so(4) symmetry corresponding to spin-1 and -2 in 5D are known in literature to stem from $\mathcal{N}=4$ super Yang-Mills and $\mathcal{N}=8$ supergravity \cite{Alvarez-Jimenez:2018lff, Caron_Huot_2019}, which can be truncated to the non-minimally coupled scalar-vector and Einstein-Maxwell-dilaton with $a=\sqrt{3}$ respectively. However, the spin-0 system in terms of a dilaton field theory, whose resulting effective metric coincides with a type of Bertrand spacetime~\cite{VPerlick_1992}, appears to have escaped attention so far.

The trajectories in all systems from a 5D perspective can be seen as reparametrisations of trajectories of a classical Kepler problem. In a similar vein, all one-body dynamics in the particular relativistic systems discussed can be linked to classical dynamics. Though we have focused on the particular case of relativistic 1-center problems with classical symmetry, all the above constructions hold with any harmonic background potential $\phi$: the Einstein, Maxwell and scalar field equations are still solved after the 5D lift. The reduced systems are then through time reparametrisation related to their classical counterparts. 

For example, a multi-center background in Einstein-Maxwell-dilaton theory with $a=\sqrt{3}$ and extremal objects can be constructed, which is orbited by an anti-extremal test particle. This is the space-like reduction of a Bargmann spacetime of which the null-reduction is the classical Newtonian multi-center system. The relevant harmonic function for $n$ centers (all extremally charged) is given by $-\phi(\vec{r})=\frac{r_{M_1}}{\abs{\vec{r}-\vec{r}_1}}+\dots+\frac{r_{M_n}}{\abs{\vec{r}-\vec{r}_n}}$~\cite{Gibbons:1993dq}, and the Hamiltonian for an anti-extremal test particle of mass $m$ satisfies
\begin{equation}
    H^2=m^2c^4+p^2c^2+\phi(\vec{r})\left(mc^2+H\right)^2\,.
\end{equation}
This connection immediately implies these solutions possess the same integrability properties as their equivalents in non-relativistic mechanics, that is, the two-center case is integrable, while the systems with a higher number of centers all display chaotic scattering~\cite{MR0749024,knauf2004integrability,Knauf2002TheNP}, as discussed for more values of the dilaton coupling in~\cite{Martijn}.

Some further questions present themselves. Firstly, all mentioned so(4) preserving systems are test-particle limits. It would be interesting to consider the two-body problem beyond the first order in mass ratio or relativistic expansions in these theories, as done for example by~\cite{Parra_Martinez_2020dzs}, who still found no precession of the orbits at third Post-Minkowskian order for extremal black holes in $\mathcal{N}=8$ supergravity (but did not confirm or exclude persistence of so(4) symmetry). Modern scattering amplitude methods\footnote{For instance, it would be interesting to investigate whether the effective one-body approach in Kerr-Schild formulation \cite{Ceresole} allows to capture comparable mass ratio effects in the 5D Kerr-Schild potential.} seem to be well-suited to this task, especially in the light of the double copy, which might allow one to directly export results from $\mathcal{N}=4$ super Yang-Mills to $\mathcal{N}=8$ supergravity.

Secondly, is it possible to extend the idea of relating relativistic systems to geodesics in higher dimensional spaces to multi-body systems? This would need to account for the multi-worldline nature of such systems, possibly by including more time-like dimensions in the higher dimensional system. It would be intriguing to see if this alternative perspective could lead to new or simplifying insights into relativistic dynamics, and whether interesting connections can be made to classical systems. 

\section*{Acknowledgements}

\noindent
We are grateful to Cliff Burgess, Joaquim Gomis, Johannes Lahnsteiner, Scott Melville, Fernando Quevedo and Ziqi Yan for stimulating discussions. D.N.~is supported by the Fundamentals of the Universe research program within the University of Groningen.

\appendix

\section{Relativistic dynamics as reparametrised classical trajectories} \label{sec: prop1}

\noindent
In this more mathematically oriented appendix we provide further details on the time reparametrisation that relates the (non-)relativistic Hamiltonians, and make this connection rigorous in terms of the Marsden-Weinstein reduction of the higher-dimensional phase space. 

To this end, we will start from a slight generalisation of the above, with Hamiltonians $H_2(q,p)$ (introduced in \cite{deNeeling:2023egt}) solving an equation of the form
\begin{equation} \label{eq:implicitform}
    -f(H_2(q,p))= \frac{p^2}{2C_u}+\frac{g(H_2(q,p))}{C_u}\Phi(q)\,,
\end{equation}
with $f$ and $g$ suitable smooth functions $\R\to\R$ and $C_u$ a nonzero constant. We will show that these are similarly related to classical Hamiltonians using Marsden-Weinstein reduction, employing a symmetry of a Hamiltonian system to reduce its dimension by an even number. This procedure guarantees that the reduced space remains symplectic, since the symplectic form is given by the restriction of the initial symplectic form to the new space~\cite{MARSDEN1974121}.

\begin{assumptions} \label{assump1}  
    \begin{enumerate}
        \item Let $\Phi:\R^d\backslash \Delta \to \R$ be a smooth function, $\Delta$ a discrete set of isolated points and $\pder[\Phi(q)]{q}\neq 0$ for all $q\in \R^d\backslash \Delta$.
        \item Let $H_1:M\to \R$ be a Hamiltonian on $M=T^*(\R^d\backslash \Delta)$ with coordinates $q\in\R^d$ and $p\in\R^d$, defined by
        \begin{equation} \label{eq:Ham1}
            H_1(q,p)= \frac{p^2}{2 f(C_v)}+\frac{g(C_v)}{f(C_v)}\Phi(q)\,,
        \end{equation}
        for two smooth functions $f,g:\R\to \R$.
        \item Let $C_u, C_v\in \R$ be nonzero constants such that $f(C_v)\neq 0$, $g(C_v)\neq 0$ and  
        \begin{equation}
            g'(C_v)\Phi(q)+f'(C_v) C_u\neq 0\,,\ \text{for all } q\in \R^d\backslash \Delta.
        \end{equation}
    \end{enumerate}
\end{assumptions}

\begin{proposition} \label{newprop}
Given Assumptions~\ref{assump1}(1--3), if the Hamiltonian $H_2(q,p)$ implicitly defined by~\eqref{eq:implicitform} exists for $H_2(q,p)=C_v$, the trajectories on an energy level $H_1=-C_u$ of the first Hamiltonian and $H_2=C_v$ of the second are in one-to-one correspondence, up to the time reparametrisation 
\begin{equation}
    \der[t_1]{t_2}=\frac{f(C_v)}{g'(C_v)\Phi(q)+f'(C_v) C_u}\,.
\end{equation}
    
\end{proposition}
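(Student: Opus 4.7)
\noindent\emph{Proof proposal.}
The plan is to split the claim into a geometric part and a dynamical part: first, to show that the constraint surfaces $\{H_1=-C_u\}$ and $\{H_2=C_v\}$ coincide as subsets of the phase space $T^*(\R^d\setminus\Delta)$; second, to show that on this common surface the Hamiltonian vector fields of $H_1$ and $H_2$ are pointwise proportional by one and the same scalar factor. Since integral curves of any Hamiltonian vector field remain inside the corresponding level set, these two facts together force the unparametrised orbits to coincide and the parameters $t_1, t_2$ to be related precisely by the ratio of the two vector fields.

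For the first part I would substitute $H_1=-C_u$ into \eqref{eq:Ham1} and $H_2=C_v$ into \eqref{eq:implicitform}, clear denominators using $f(C_v)\neq 0$ and $C_u\neq 0$ from Assumption~\ref{assump1}(3), and observe that both reduce to the same algebraic relation $\tfrac{1}{2} p^2=-C_u f(C_v)-g(C_v)\Phi(q)$, establishing setwise equality of the two surfaces. For the second part, Hamilton's equations for $H_1$ are immediate from \eqref{eq:Ham1}. To obtain Hamilton's equations for $H_2$, I would differentiate \eqref{eq:implicitform} with respect to $p_i$ and $q_i$ in turn and solve for $\pder[H_2]{p_i}$ and $\pder[H_2]{q_i}$ via the implicit function theorem. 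The nonvanishing condition $g'(C_v)\Phi(q)+f'(C_v)C_u\neq 0$ in Assumption~\ref{assump1}(3) is precisely the hypothesis that makes this inversion valid everywhere on the level set and guarantees that the resulting expressions are smooth and nowhere singular.

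A term-by-term comparison of the two sets of Hamilton equations on the common surface then shows that each component of the $H_2$-vector field equals the corresponding component of the $H_1$-vector field multiplied by a single scalar function of $q$ alone, which, up to a sign to be tracked, reproduces $f(C_v)/(g'(C_v)\Phi(q)+f'(C_v)C_u)$. As this factor depends only on $q$ and is nowhere vanishing on the level set, it defines a smooth change of parameter along every orbit, giving the claimed one-to-one correspondence of trajectories with the stated reparametrisation. The main obstacle I anticipate is not conceptual but bookkeeping: the minus sign in front of $f(H_2)$ in \eqref{eq:implicitform} interacts nontrivially with the sign in $\dot p_i=-\pder[H]{q_i}$, so careful sign tracking is needed to land on the precise formula in the statement rather than its negative. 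Conceptually the result is a Maupertuis-type statement, fitting naturally into the Marsden-Weinstein picture referenced in the preamble of the appendix: any two Hamiltonians carving out the same constraint surface induce the same reduced dynamics up to a reparametrisation of the evolution parameter.
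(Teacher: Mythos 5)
Your argument is correct, but it follows a genuinely different route from the paper's. You stay entirely inside $T^*(\R^d\backslash\Delta)$: the two surfaces $\{H_1=-C_u\}$ and $\{H_2=C_v\}$ indeed both reduce to $\tfrac12 p^2=-C_u f(C_v)-g(C_v)\Phi(q)$, and implicit differentiation of~\eqref{eq:implicitform} gives, on that surface,
\begin{equation*}
  \pder[H_2]{p_i}=\frac{-p_i}{g'(C_v)\Phi(q)+f'(C_v)C_u}\,,\qquad
  \pder[H_2]{q_i}=\frac{-g(C_v)\,\pder[\Phi]{q_i}}{g'(C_v)\Phi(q)+f'(C_v)C_u}\,,
\end{equation*}
so that $X_{H_2}=\lambda(q)\,X_{H_1}$ with $\lambda(q)=-f(C_v)/\bigl(g'(C_v)\Phi(q)+f'(C_v)C_u\bigr)$; Assumption~\ref{assump1}(3) is precisely what legitimises the implicit function theorem and keeps $\lambda$ finite and nonzero, as you say. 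The paper instead goes up a dimension: it introduces $\bar M=T^*(\R^2\times(\R^d\backslash\Delta))$ with $\mathcal H=g(P_v)\Phi+f(P_v)P_u+\tfrac12 p^2$ and realises $H_1$ and $H_2$ as the two Marsden--Weinstein reductions of $\mathcal H^{-1}(0)$ over the cyclic coordinates $v$ and $u$; the reparametrisation then drops out of Hamilton's equations $\der[u]{\lambda}=f(C_v)$ and $\der[v]{\lambda}=g'(C_v)\Phi+f'(C_v)C_u$. Your version is more elementary and proves exactly the stated claim; the paper's version buys more, namely that both reduced systems inherit their symplectic structure from a common parent phase space, which is the 5D Bargmann/Eisenhart geometry exploited throughout the main text. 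The sign you flag is real but purely conventional: the paper identifies $t_2$ with the coordinate $v$ conjugate to $P_v=H_2$, whereas the standard convention (Hamiltonian equals minus the momentum conjugate to time) would have $H_2$ generate the flow in $-v$; this orientation choice is exactly the relative minus sign between your $\lambda(q)$ and the displayed formula, so the two derivations agree up to the direction in which $t_2$ runs.
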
 
\begin{proof}
Consider a manifold $\Bar{M}=T^*(\R^2\cross(\R^d\backslash \Delta))$, with canonical coordinates $(u,v,q;P_u,P_v,p)$ and Hamiltonian $\mathcal{H}:\Bar{M} \to \R$ given by
\begin{equation}
    \mathcal{H}(u,v,q;P_u,P_v,p)=g(P_v)\Phi(q)+f(P_v)P_u+\frac{p^2}{2}\,.
\end{equation}
The proposition is an application of Marsden-Weinstein reduction restricted to the level set $\mathcal{H}^{-1}(0)$. Since $u$ and $v$ are cyclic coordinates by construction and the corresponding actions (by translation) are both free and proper, we can apply the Marsden-Weinstein theorem~\cite{MARSDEN1974121} to symplectically reduce over either of them.
We can construct the two possible reductions using the momentum maps $\mu_{\star}:=(\mathcal{H},P_{\star}):\Bar{M} \to \mathbb{R}^2$, where $\star$ denotes either $u$ or $v$, giving two, in principle different base spaces $B_{\star}=\mu_{\star}^{-1}(0,C_{\star})/\mathbb{R}_{\star}$ and corresponding projections $\pi_{\star}:\mu_{\star}^{-1}(0,C_{\star})\to B_{\star}$.
Here $(0,C_{\star})$ are assumed to be regular values of $\mu_{\star}$, for now. Note that the base spaces are odd-dimensional, as the value of $\mathcal{H}$ is fixed, so that the established relation is between energy levels of $H_1$ and $H_2$. 

Consider the regularity requirements $\dif \mathcal{H}\neq 0$ and $\dif P_{u},\dif P_v\neq 0$. The latter are trivially satisfied, while the former gives 
\begin{align}
    &\dif \mathcal{H}(u,v,q;P_u,P_v,p) =\pder[\Phi]{q} g(P_v) \dif q +  f(P_v)\dif P_u \notag \\
    &\qquad + ( g'(P_v) \Phi(q) + f'(P_v)P_u)\dif P_v + p\, \dif p \,.  
\end{align}
By the Assumptions~\ref{assump1}(1--3) this is never vanishing on the domain of consideration, nor are $\dif \mathcal{H}$ and $\dif P_{u},\dif P_v$ linearly dependent. This shows that the values $(0,C_{\star})$ are indeed regular.

Let us now reduce the system over the cyclic coordinates $(u.v)$ to obtain new Hamiltonian functions that are exclusively dependent on the classical variables $(q,p)$. When we reduce by $v$ we have  
\begin{equation}
    g(C_v)\Phi+f(C_v)P_u+\frac{p^2}{2}=0\,,
\end{equation}
so that reparametrisation $\lambda\to f(C_v)\lambda$ results in
\begin{equation}
    -P_u=\frac{p^2}{2f(C_v)} +\frac{g(C_v)}{f(C_v)}\Phi(q)\,.
\end{equation}
This means that we effectively have a Hamiltonian $-P_u=H_1(q,p)$, which moreover has a time parameter given by $t_1=u=f(C_v)\lambda$ associated to it, as is apparent from Hamilton's equation $\der[u]{\lambda}=\pder[\mathcal{H}]{P_u}$. 

If, instead, we reduce phase space $\Bar{M}$ over $u$, we end up with 
\begin{equation}
    -f(P_v)= \frac{p^2}{2C_u}+\frac{g(P_v)}{C_u}\Phi(q)\,.
\end{equation}
By assumption, this is solved by $H_2(q,p)=C_v$. Hamilton's equation  
\begin{equation}
    \der[v]{\lambda}=g'(P_v)\Phi+f'(P_v)C_u\,,
\end{equation}
then implies the associated time $t_2$ is given by $\dif t_2=\dif v=(g'(P_v)\Phi+f'(P_v)C_u)\dif \lambda$. For energy levels $H_1=-C_u$ of the first Hamiltonian and $H_2=C_v$ of the second, this means the reparametrisation from the first Hamiltonian to the second is given by
\begin{equation}
    \der[t_1]{t_2}=\frac{f(C_v)}{g'(C_v)\Phi(q)+f'(C_v) C_u}\,,
\end{equation}
which concludes the proof of the relation between the two.
\end{proof}

\noindent
For concreteness we now apply this procedure to the relativistic class of Hamiltonians
 \begin{align} \label{eq:relHams}
     H^2 = m^2 + p^2 - \left( \frac{1}{r} \right)  F\left(m + H\right) \, 
 \end{align}
dropping factors of $c^2$ for legibility. As discussed in the main text, under the replacements~\eqref{spacelike} these lift up to the level set $\mathcal{H}^{-1}(0)$ of 
\begin{equation}
    \mathcal{H}=P_5^2-P_0^2+p^2-\left( \frac{1}{r} \right) F\left(P_5+P_0\right)\,. 
\end{equation}
Hamilton's equation tells us that the relation between the time-parameter $\lambda$ of this new Hamiltonian is related to the old time parameter $t$ by 
\begin{equation}
    \der[t]{\lambda}=-2P_0-\left( \frac{1}{r} \right) F'\left(P_5+P_0\right)\,,
\end{equation}
where $t$ is conjugate to $P_0$ (and hence $H$).

The canonical transformation to null coordinates 
\begin{align}
    (x_+,x_-)&=\tfrac{1}{\sqrt{2}}\left(x_5+t,x_5-t\right)\,,\\
    (P_+,P_-)&=\tfrac{1}{\sqrt{2}} (P_5+P_0,P_5-P_0)\,, 
\end{align}
results in
\begin{equation}
    -2P_-P_+=p^2-\left( \frac{1}{r} \right) F\left(\sqrt{2}P_+\right)\,
\end{equation}
on level set $\mathcal{H}^{-1}(0)$. Assuming $P_+ \neq 0$, a time reparametrisation to $\tilde \lambda= 2 P_+ \lambda$ reduces it further to
 \begin{equation}
    -P_-=\frac{p^2}{2 P_+}-\left( \frac{1}{r} \right) \frac{1}{2 P_+}  F\left(\sqrt{2}P_+\right)\,,
\end{equation}
so the alternative symplectic reduction over $x_+$ setting $P_+=m_+$ results in $-P_-=:H_-$ defining a Kepler Hamiltonian reading
\begin{equation}
    H_-=\frac{p^2}{2 m_+}-\left( \frac{1}{r} \right) \frac{1}{2 m_+}  F\left(\sqrt{2}m_+\right)\,.
\end{equation}
The relation between both time parametrisation therefore reads
\begin{equation}
    \der[x_-]{\lambda}=\pder[\mathcal{H}]{P_-}=2P_+\,,
\end{equation}
where $x_-$ is conjugate to $H_-$. 

Crucially, the Marsden-Weinstein result guarantees that both reductions preserve the symplectic structure.
From the above, it directly follows that the time reparametrisation that links trajectories on energy levels $H_-=\frac{1}{\sqrt{2}}(H_0-m)$ of the Kepler Hamiltonian to those on energy levels $H=H_0$ of the original is given by
\begin{equation}
    \der[x_-]{t}=\frac{\sqrt{2}(H_0+m)}{-2H_0-\left( \frac{1}{r} \right) F'\left(m+H_0\right)}\,.
\end{equation}
This transformation maps relativistic systems that violate Kepler's third law to Kepler ones that satisfy it. 

Note that only when $F(x)$ is homogeneously quadratic, does each energy level map to the same Kepler background, such that the potential coefficient in the Hamiltonian becomes linear in the mass of the probe. For all other functions, the effective Newton's constant becomes probe-mass dependent and hence would violate the equivalence between gravitational and inertial mass.

%\printbibliography
\bibliography{name}

\end{document}